\newtheorem*{theorem*}{Theorem}
\newtheorem{proposition}{Proposition}[section]
\theoremstyle{definition}
\newtheorem{remark}[proposition]{Remark}
\def \E {\mathbb{E}}
\begin{document}


\title[Shafik et~al.]{Flexible transition probability
model for assessing cost-effectiveness of breast cancer screening}

\author[Shafik]{Nourhan Shafik}
\address{Department of Mathematics and Systems Analysis, Aalto University School of Science, Finland \\ 
Institute for Statistical and Epidemiological Cancer Research, Finnish Cancer
Registry, Helsinki, Finland} 
\author[Ilmonen]{Pauliina Ilmonen}
\address{Department of Mathematics and Systems Analysis, Aalto University School of Science, Finland}
\email{pauliina.ilmonen@aalto.fi}
\author[Viitasaari]{Lauri Viitasaari}
\address{Uppsala University, Department of Mathematics, Uppsala, Sweden}
\author[Sarkeala]{Tytti Sarkeala}
\address{Institute for Statistical and Epidemiological Cancer Research, Finnish Cancer
Registry, Helsinki, Finland} 
\author[Hein\"avaara]{Sirpa Hein\"avaara}
\address{Institute for Statistical and Epidemiological Cancer Research, Finnish Cancer
Registry, Helsinki, Finland}

\begin{abstract}
Breast cancer is the most common cancer among Western women. Fortunately, organized screening has reduced breast cancer mortality and, consequently, the European Union has recommended screening with mammography for 50–69-year-old women. This recommendation is followed well in Europe. Widening the screening target age further is supported by conditional recommendations for 45–49- and 70–74-year-old women. However, before extending screening to new age groups, it's essential to carefully consider the benefits and costs locally as circumstances vary between different regions and/or countries. We propose a new approach to assess cost-effectiveness of breast cancer screening for a long-ongoing program with incomplete historical screening data. The new model is called flexible stage distribution model. It is based on estimating the stage distributions of breast cancer cases under different screening strategies. In the model, an ongoing screening strategy may be used as a baseline and other screening strategies may be incorporated by changes in the incidence rates. The model is flexible, as it enables to apply different approaches for estimating the altered stage distributions. Thus, if randomized data is available, one may rely on that. On the other hand, if randomized data is not available, altered stage distributions may be estimated by extrapolating the stage distributions of the youngest and oldest screened/non-screened age groups. We apply the proposed flexible stage distribution model for assessing incremental cost of extending the current biennial breast cancer screening to younger and older target ages in Finland.
\end{abstract}

\keywords{cancer screening, breast cancer, cost effectiveness, stage distribution}

\maketitle

\section{Introduction}
Breast cancer (BC) is the most common cancer among Western women \cite{IARC}. Fortunately, mortality due to BC can be and have been successfully reduced by organized screening \cite{UK2012, Euroscreen2012, Nelson2016}. Subsequently, the EU has recommended screening with mammography for 50–69-year-old women and this has been adapted widely in Europe \cite{EU_guideline2013, ECIBC, Basu2018}. Potential to widen the target age further is supported by conditional recommendations for 45–49- and 70–74-year-old women \cite{ECIBC}. However, prior to possible widening, it's essential to carefully consider the benefits and costs locally as circumstances vary between different regions and/or countries. The assessment of benefits, breast cancer mortality reduction or life-years gained (LYG) against costs, should take into account country-specific conditions, and provide support for an optimal use of available health care resources.

Cost-effectiveness modelling has been recommended to support policy-making for a new or a recently started screening program \cite{CanCon}. Usually, in cost-effectiveness modelling, alternative screening strategies are assumed to occur in the future for a cohort's lifetime and no-screening is used as a reference. From modelling perspective, ideally, one could rely on recent data where individuals were randomly assigned to different screening strategy groups, including the no screening group. In practice this is seldom the case and lack of data may severely complicate cost-effectiveness modelling. The registration of BC screening may also have been incomplete, especially if screening has started decades ago. Moreover, we have seen significant changes in both, incidence and treatment guidelines. Thus, lack of suitable data makes standard cost-effectiveness modelling unreliable.
Cost-effectiveness modelling is indeed challenging if a screening program has been ongoing for decades and its previous performance is not fully known. Indeed, reliability of future predictions depends on the past, on a prediction base. If underlying data are incomplete or heterogeneous, one cannot lay high confidence on future predictions either. A corresponding challenge occurs if a situation without screening has existed several decades ago. No-screening is then not the most relevant reference. Instead, it would be reasonable to compare alternative screening strategies primarily to the current screening.

We propose a new approach to assess cost-effectiveness of BC screening for a long-ongoing program with incomplete historical screening data. The new model is called flexible stage distribution model. It is based on estimating the stage distributions of BC cases under different screening strategies. In the model, an ongoing screening strategy may be used as a baseline and other screening strategies may be incorporated by changing the incidence rates. The model is flexible, as it enables to apply different approaches for estimating the altered stage distributions. Thus, if randomized data is available, one may rely on that. On the other hand, if randomized data is not available, altered stage distributions may be estimated by extrapolating the stage distributions of the youngest and oldest screened/non-screened  age groups. Moreover, the model enables to use either TNM (tumor (T), nodes (N), and metastases (M)) or some other cancer classification depending on availability of data.

We apply the proposed flexible stage distribution model for assessing incremental cost of extending the current biennial breast cancer screening to younger and older target ages in Finland.

\subsection{Ethical considerations}
Permit for this register-based study has been granted by Finnish Social and Health Data Permit Authority Findata (THL/504/14.06.00/2020). 

\section{Breast cancer screening in Finland}
In Finland the nationwide organized screening started in 1992 targeted biennially to 50-59-year-old women. Individual municipalities are responsible to offer screening to their permanent residents and have often offered breast screening also to wider target ages \cite{Sarkeala}. Screening has therefore been ongoing with varying target ages by time and municipalities. The current nationwide target age, 50-69 years, was introduced gradually and adopted fully in 2017.
Participation to screening is currently about 82\% and varies very little within the target age \cite{Screeningstatistics}. Registration of screening data has been complete since 2000. 

\section{Flexible transition probability
model}
The proposed flexible transition probability model is based on modeling the effect of screening on cancer stage distributions at the time of the first diagnosis. This is done separately for different age groups. Costs of treatment and survival depend on the stage distribution and the age group.

Age groups are indexed by $j=1,2,\ldots,J$, and screening rounds are taken at every time instance $j$. In our study, we consider biennial screening, and the first age group $j=1$ corresponds to the 46-47 year old females, the second to the 48-49 year old females, and the last one to the 98-99 old  females. For simplicity, we assume that no one survives over 100 years. The age groups and screening intervals in our study are chosen to match available data and current policies in Finland. We consider different policies $h =[h(1), h(2), \ldots, h(J)]\in \{S,NS\}^J$ where $S$ corresponds to the screening and $NS$ to no-screening. For example, with two age-groups, i.e. $J=2$, a policy $h =[S,S]$ corresponds to screening of both age groups, $h=[NS,S]$ to screening of older age group, and so on. In each of the screening round, individuals are diagnosed with stage $k$ that are indexed by $k=-1,0,\ldots, K$, with $k=-1$ corresponding to no breast cancer. In our study, we have $K=4$, where the stages correspond to 0 - Unknown, 1 - Localized, 2- Non localized/Regional lymph nodes metastases, 3 - Metastasized farther than regional lymph nodes or invades adjacent tissues, and 4 - In situ carcinoma. Alternatively, one could, for example, use TNM classifications here. That was not possible in our study as Finnish Cancer Registry does not register TNM classifications. The observed state of an individual $i$ of age $j$ is denoted by $X^i_{j,h}$. Note that the observed state depends on the policy $h$ through the value $h(j)$ determining whether the age group $j$ is screened or not. We denote by $\mu^i_{j,h} = \mu_{j,h}$ the state distribution of $X$, observed at screening. That is, $\mu_{j,h}(k) = P(X_{j,h}^i = k)$. Note that here it is assumed that $\mu$ does not depend on the particular individual (only the age), but it depends on the policy through value $h(j)$ representing the choice whether age group $j$ is screened or not. For this reason, we omit the superscript $i$ in the notation and simply write, e.g. $X_{j,h}$ whenever confusion may not arise.
\begin{remark}
For our purposes, we model the conditional distribution $P(X^i_{j,h} = k | X^{i}_{j,h} \neq -1)$ and incidence rates separately. That is, we model separately the probabilities of given stages conditioned on breast cancer being diagnosed, and the incidence rate $P(X^i_{j,h} \neq -1)$. The connection to actual stage distribution $\mu_{j,h}$ is simply
{\footnotesize
$$
\mu_{j,h}(k) = P(X^{j,h} = k)= P(X^i_{j,h} = k | X^{i}_{j,h} \neq 0)P(X^{i}_{j,h} \neq 0).
$$}
\end{remark}

Once an individual $i$ of age $j$ is diagnosed with stage $k$, the time to death is denoted by $T^i_{j,k}$ and its distribution, not depending on $i$, is denoted by $\lambda_{j,k}$. That is, we have $\lambda_{j,k}(t) = P(T^i_{j,k} = t)$. Hence the expected number of years left, once the individual $i$ of age $j$ is on stage $k$, is
\begin{equation}
\label{eq:expected-years}
\E T^i_{j,k} = \sum_{t=0}^\infty t\lambda_{j,k}(t).
\end{equation}
We are interested in the effect of screening only and individuals are removed from the screening population once they either die (for any reason) or are diagnosed with breast cancer. We model a cohort of size $N_0 = 100000$ of individuals throughout their life span, starting at the first (possible) screening age. The dynamics of the screening population $N_j$ is hence given by
{\footnotesize$$
N_{j+1} = N_j \left(1-P(T_{j}=0, T_j=1) - P(T_j\geq 2,X_{j,h} \neq -1)\right).
$$}
Here $T_j$ denotes the time to death of an individual at age $j$, regardless of the stage. That is, at each screening step (age $j$) we invite $N_j$ individuals to screening, and the amount $N_{j+1}$ invited to the next screening contains $N_j$ from which we have removed those who have died during the two years (two year screening interval) with proportion $P(T_j=0, T_j=1)$ and those who survive at least two years for the next screening to happen, but got diagnosed, i.e. the stage $X\neq -1$. These individuals have proportion $P(T_j\geq 2, X_{j,h}\neq -1)$. Clearly, we have
$$
P(T_j=t) = \sum_{k=-1}^K P(T_{j,k}=t)P(X_{k,h}=k)
$$
$$= \sum_{k=-1}^K \lambda_{j,k}(t)\mu_{j,h}(k).
$$
The total number of life years left, measured at the beginning, of the entire population is given by the following result.
\begin{proposition}
\label{prop:years}
The total number of life years left $T_h$ with a given policy is given by
{\small$$
\E T_h = \sum_{j=2}^J 2N_j + \sum_{j=1}^J \left[\lambda_{j,-1}(1)\mu_{j,h}(-1) + \sum_{k=0}^K \mu_{j,h}(k)\E T_{j,k}\right]N_j,
$$}
where $\E T_{j,k}$ is given by \eqref{eq:expected-years}.
\end{proposition}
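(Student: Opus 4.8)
The plan is to compute $\E T_h$ by linearity of expectation, decomposing the total residual lifetime of the cohort into contributions attributed to each screening round $j$, and to recognise the population recursion defining $N_{j+1}$ as the bookkeeping device that prevents any life year from being counted twice. Concretely, I would fix a round $j$, consider the $N_j$ individuals present at that screening, and condition on the stage $k=X_{j,h}$ diagnosed at round $j$, treating the diagnosed and cancer-free individuals separately.

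First I would treat the diagnosed individuals. For a stage $k\in\{0,1,\ldots,K\}$ (i.e.\ $k\neq-1$) the individual leaves the screening population, and their entire remaining lifetime is $T_{j,k}$; since they never re-enter a later round, their full residual life must be charged now, contributing $\sum_{k=0}^K\mu_{j,h}(k)\E T_{j,k}$ per capita and hence $N_j\sum_{k=0}^K\mu_{j,h}(k)\E T_{j,k}$ for the round. This reproduces the second summand inside the bracket of Proposition~\ref{prop:years}, with $\E T_{j,k}$ given by \eqref{eq:expected-years}.

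Next I would treat the cancer-free individuals (stage $k=-1$, probability $\mu_{j,h}(-1)$), splitting by the integer death time $T_{j,-1}$ inside the two-year interval: dying at $t=0$ contributes $0$ years, dying at $t=1$ contributes exactly one year (the coefficient of $\lambda_{j,-1}(1)$), and surviving the interval, $T_{j,-1}\ge 2$, contributes exactly the two elapsed years while the individual is carried forward to round $j+1$, where the rest of their life is accounted recursively. The death-at-$t=1$ case gives the term $N_j\lambda_{j,-1}(1)\mu_{j,h}(-1)$, and the crucial identification is that the ``two survived years'' term equals $2N_{j+1}$: from $1=P(T_j\le 1)+P(T_j\ge 2)$ together with the definition of $N_{j+1}$ one obtains $N_{j+1}=N_j\,P(T_j\ge 2,\,X_{j,h}=-1)=N_j\,\mu_{j,h}(-1)\,P(T_{j,-1}\ge 2)$, so that the survivors' contribution $2N_j\mu_{j,h}(-1)P(T_{j,-1}\ge 2)$ is precisely $2N_{j+1}$.

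Finally I would sum the per-round contributions over $j=1,\ldots,J$. The diagnosed-stage terms and the $\lambda_{j,-1}(1)$ terms assemble directly into $\sum_{j=1}^J N_j\bigl[\lambda_{j,-1}(1)\mu_{j,h}(-1)+\sum_{k=0}^K\mu_{j,h}(k)\E T_{j,k}\bigr]$, while the survivor terms telescope, $\sum_{j=1}^{J}2N_{j+1}=\sum_{j=2}^{J}2N_j$, using the boundary convention $N_{J+1}=0$ that follows from the assumption that no one survives past the last age group. The main obstacle I anticipate is not any single computation but the accounting discipline: proving that every life year of every individual is charged once and only once. The two delicate points are (i) that diagnosed individuals are charged their whole residual life at diagnosis precisely because they exit the population, whereas cancer-free survivors are charged only the two elapsed years and deferred, and (ii) that the recursion $N_{j+1}=N_j\,P(T_j\ge 2,\,X_{j,h}=-1)$ is exactly what glues the deferred years across consecutive rounds into the clean telescoped total $\sum_{j=2}^J 2N_j$.
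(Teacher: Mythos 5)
Your proof is correct and follows essentially the same decomposition as the paper's: per-round accounting that charges diagnosed individuals their full residual life $\E T_{j,k}$, charges cancer-free individuals dying at $t=1$ one year, and defers survivors with two elapsed years to the next round. The only difference is that you explicitly verify the identification $2N_j\,\mu_{j,h}(-1)\,P(T_{j,-1}\ge 2)=2N_{j+1}$ from the population recursion and the telescoping with $N_{J+1}=0$, steps the paper asserts without derivation.
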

\begin{proof}[Proof of Proposition \ref{prop:years}]
On each screening round, we invite $N_j$ individuals to screening. The proportion of those who are not diagnosed with breast cancer is $\mu_{j,h}(-1)$, and the proportion of age $j$ at stage $k=-1$ (no breast cancer) live only one year is $\lambda_{j,-1}(1)$. These individuals have one remaining year, contributing a factor $\lambda_{j,-1}(1)\mu_{j,h}(-1) N_j$. Similarly, the individuals diagnosed with breast cancer ($k\neq -1$) have expected remaining years $\E T_{j,k}$, and the proportion of individuals is $\mu_{j,h}(k)$. Summing over different stages $k\neq -1$ contributes with factor $\sum_{k=0}^K \mu_{j,h}(k)\E T_{j,k}N_j$. Finally, the amount of individuals who continue to the next screening round $j+1$ is $N_{j+1}$, who at age $j+1$ are then contributed with two lived years. Summing over screening rounds (i.e., over age groups $j$) contributes with the factor
$$
\sum_{j=2}^J 2 N_j.
$$
This completes the proof.
\end{proof}

Consider next the related cancer costs. The cost of treatment and death of an individual $i$ of age $j$ depends on the stage $k$ (given by the variable $X^i_{j,h}$), the time to death $t$ (given by the variable $T^{i}_{j,k}$), the age $j$, and the cause of death. We denote by $d=1,2,\ldots, D$ the different causes of death, and for an individual $i$ of age $j$ in stage $k$, the random variable $D^{i}_{j,k} \in \{1,\ldots, D\}$ determines the cause of death. In our study, we use $D=2$ and consider deaths due to breast cancer ($d=1$) or due to other causes ($d=2$). Given the quadruple $(j,k,t,d)$, the average cost $\tilde{C}_{j,k,t,d}$ is considered as constant, and consists of treatment related costs solely. That is, if $k=-1$, then we obtain no costs as no treatment is required. Hence, the treatment and death cost $C^i_{j,k}$ of an individual $i$ of age $j$ and at state $k$ is given by
$$
C^{i}_{j,k} = \sum_{t=0}^\infty \sum_{d=1}^D \tilde{C}_{j,k,t,d} 1_{T^i_{j,k} = t, D^{i}_{j,k} = d},
$$
where $1_{T^i_{j,k} = t, D^{i}_{j,k} = d}$ denotes the indicator taking value $1$ if the individual $i$ at age $j$ lives $t$ years after diagnosed with stage $k$ and dies to the cause $d$. We denote by $\pi_{j,k}$ the joint distribution of $T^{i}_{j,k}$ and $D^{i}_{j,k}$ that is assumed to be independent of the chosen individual (of age $j$ and at stage $k$). That is, we have
$$
\pi_{j,k}(t,d) = P(T^i_{j,k}=t, D^i_{j,k}=d),
$$
giving the probability that an individual at age $j$ and stage $k$ lives exactly $t$ years and dies due to $d$.
Obviously, we have the connection
$$
\pi_{j,k}(t,d) = P(D^i_{j,k}=d|T^i_{j,k} = t)P(T^i_{j,k}=t)
$$
where $P(T_{j,k}^i=t)$ is given by $\lambda_{j,k}(t)$. We also denote by $\tilde{C}_{h(j)}$ the average screening cost per individual, given the policy $h$ determining whether the group $j$ is screened or not. This gives us the following expected costs.
\begin{proposition}
\label{prop:costs}
The total expected costs related to age group $j$ is given by
{\small\begin{equation}
\label{eq:expected-cost}
\mathbb{E}C_{j,h} = N_j\tilde{C}_{h(j)} + \sum_{k=0}^K N_j\mu_{j,h}(k)\sum_{t=0}^\infty \sum_{d=1}^D \tilde{C}_{j,k,t,d} \pi_{j,k}(t,d).
\end{equation}}
and the total expected costs during the entire follow up period is given by
\begin{equation}
\label{eq:expected-cost-total}
\mathbb{E}C_h = \sum_{j=1}^J \mathbb{E}C_{j,h}.
\end{equation}
\end{proposition}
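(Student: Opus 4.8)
The plan is to follow the same aggregation strategy used in the proof of Proposition~\ref{prop:years}, now tracking costs rather than life years. First I would split the total cost attached to age group $j$ into two additive contributions: the screening cost incurred by every invited individual, and the treatment/death cost incurred only by those who are diagnosed. Since all $N_j$ individuals invited at age $j$ undergo the procedure associated with $h(j)$ at per-capita cost $\tilde{C}_{h(j)}$, the first contribution is simply $N_j\tilde{C}_{h(j)}$, independently of the eventual stage.

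For the second contribution, I would first compute the expected treatment cost of a single individual $i$ of age $j$ known to be at stage $k$. Taking expectations in the defining identity
$$
C^{i}_{j,k} = \sum_{t=0}^\infty \sum_{d=1}^D \tilde{C}_{j,k,t,d}\, 1_{T^i_{j,k}=t,\, D^i_{j,k}=d},
$$
and using linearity of expectation together with $\E 1_{T^i_{j,k}=t,\, D^i_{j,k}=d} = \pi_{j,k}(t,d)$, gives
$$
\E C^i_{j,k} = \sum_{t=0}^\infty \sum_{d=1}^D \tilde{C}_{j,k,t,d}\, \pi_{j,k}(t,d).
$$
Next, among the $N_j$ invited individuals the expected number falling in stage $k$ is $N_j\mu_{j,h}(k)$, while those with $k=-1$ contribute no treatment cost by assumption. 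Weighting the per-individual expected cost by the expected counts and summing over $k=0,\ldots,K$ produces the second term $\sum_{k=0}^K N_j\mu_{j,h}(k)\sum_{t,d}\tilde{C}_{j,k,t,d}\pi_{j,k}(t,d)$; adding the screening cost establishes \eqref{eq:expected-cost}.

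Finally, \eqref{eq:expected-cost-total} follows at once: the total cost over the whole follow-up is the sum of the costs booked in each round, so linearity of expectation yields $\E C_h = \sum_{j=1}^J \E C_{j,h}$.

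The only genuinely analytic point, and hence the main thing to be careful about, is the interchange of the expectation with the infinite sum over $t$ in the computation of $\E C^i_{j,k}$. This is harmless because all summands are nonnegative (the costs satisfy $\tilde{C}_{j,k,t,d}\ge 0$ and the indicators are nonnegative), so Tonelli's theorem justifies the exchange irrespective of convergence questions; alternatively one assumes $\sum_{t,d}\tilde{C}_{j,k,t,d}\pi_{j,k}(t,d)<\infty$, i.e. finite expected treatment cost per stage, which holds in any realistic setting.
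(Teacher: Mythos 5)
Your proposal is correct and follows essentially the same route as the paper's proof: compute the expected treatment cost of an individual at a given stage from the indicator representation, condition on (weight by) the stage distribution $\mu_{j,h}(k)$ over $k=0,\ldots,K$, add the per-capita screening cost and scale by $N_j$, then sum over age groups. The extra remark on justifying the interchange of expectation and the infinite sum via nonnegativity is a harmless refinement the paper leaves implicit.
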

\begin{proof}[Proof of Proposition \ref{prop:costs}]
Expected costs related to an individual $i$ of age $j$ and diagnosed with stage $k$ is given by
$$
\sum_{t=0}^\infty \sum_{d=1}^D \tilde{C}_{j,k,t,d} \pi_{j,k}(t,d).
$$
Hence the expected costs of the individual $i$ of age $j$ is obtained by conditioning on the stage $k$, leading to
$$
\E C^i_{j,h} = \sum_{k=0}^K \mu_{j,h}(k)\sum_{t=0}^\infty \sum_{d=1}^D \tilde{C}_{j,k,t,d} \pi_{j,k}(t,d).
$$
Adding the average screening cost $\tilde{C}_{h(j)}$ and multiplying with the number $N_j$ of individuals in age group $j$ leads to \eqref{eq:expected-cost}, from which the total costs \eqref{eq:expected-cost-total} follows by summing over age groups.
\end{proof}

\section{Data and parameter estimation}\label{data}

Our analysis relies on registry data from years 2000-2018. 
The data was created by linking the individual breast cancer data of Finnish women from the Finnish Cancer Registry (FCR data) with the breast cancer screening data from the Mass Screening Registry, a part of the FCR. We restricted to the data to diagnoses of first primary breast cancer, invasive and in situ carcinomas in breast (ICD-10 C50 \& D05). We also restricted the data to subsequent screening rounds for those aged at least 60 years and therefore excluded 4.6\% women (N=4226). After this exclusion we were able to assess a natural, long term (steady-state) screening of stage distribution. As total, 88607 women were included in the analysis. Cause of death (breast cancer, other cause) were included in the data. 
In the data, the cancer stages used by the FCR were coded further to five separate classes that were the stage 0 - Unknown, 1 - Localized, 2 - Non localized/Regional lymph nodes metastases, 3 - Metastasized farther than regional lymph nodes or invades adjacent tissues, and the stage 4 - In situ carcinoma.
The corresponding overall stage distributions were $9.6\%, 52.2\%, 25.4\%, 0.85\%,$ and $11.9\%$. Age-group specific incidence rates were extracted from NordCan \cite{NORDCAN} with some adjustments. 
For example, the incidence rate for the age group 50-51 year old females, that is the youngest screened age group in the current screening policy, was obtained from the Finnish Cancer Registry data. Note that NordCan provides data only on invasive cancer. Finnish Cancer Registry collects data on incidence of in situ carsinomas as well. Total age specific incidence rates for our analysis were obtained by combining data from NordCan and Finnish Cancer Registry. 

Stage specific survival rates were computed separately for all the age groups and the ones that were lost to follow-up were censored. We followed all age groups until the age of 99. We thus assumed that after 18 years of follow-up, patients did not have excess mortality due to their breast cancer and would survive as general female population in 2019. General population mortality rates were obtained from Statistics Finland. 

The costs of screening and the age and stage specific costs of treatment from specialized medical care were obtained from calculations for Lehtinen et al. (2019) \cite{Lehtinen}. In the data analysis, the unit costs of screening are 30 euros per invitee. 
The age and stage specific treatment costs of breast cancer corresponding to the first year after diagnosis, $C_1$, are displayed in Table \ref{tab:cost1}. The age and stage specific treatment costs of breast cancer corresponding to the years 2-5 after diagnosis, $C_2$,  are displayed in Table \ref{tab:cost2}. The age and stage specific treatment costs of breast cancer corresponding to the last year before breast cancer death, $C_3$, are displayed in Table \ref{tab:cost3}. 
If the patients dies during $n$ years after the breast cancer diagnosis for some other reason than breast cancer, the overall treatment costs are equal to:
\begin{equation}
\label{wams} C_n=
\begin{cases}
C_1+(n-1)C_2 &\text{for}\ 1\leq n\leq 5\\
C_1+4C_2 &\text{for}\ n > 5
\end{cases}
\end{equation}
If the patients dies from breast cancer during $n$ years after the breast cancer diagnosis, the overall treatment costs are equal to:
\begin{equation}
\label{wams2} C_n=
\begin{cases}
(n-1)C_1+C_3 &\text{for}\ n \in\{1,2\}\\
C_1+(n-2)C_2+C_3 &\text{for}\ n\in\{3,4,5\}\\
C_1+4C_2+C_3 &\text{for}\ n>5\\
\end{cases}
\end{equation}
Note that our approach to treatment costs is very conservative as in general breast cancer treatment and follow up lasts from five to ten years.

For assessing the effect of extending screening to the younger age groups, we extrapolated from the closest screened age groups. That is, we changed the stage distribution of the age group 46-47 old females to be the same as the stage distribution of the age group 50-51 old females under the current policy, the stage distribution of the age group 48-49 old females to be the same as the stage distribution of the age group 52-53 old females under the current policy, the stage distribution of the age group 50-51 old females to be the same as the stage distribution of the age group 52-53 old females under the current policy. In order to model the effect of the first screening, incidence rate of the age group 46-47 old females were increased by $28\%$ the incidence rate of the age group 48-49 old females were increased by $24.7\%$ and the incidence rate the age group 50-51 old females were decreased by $11.9\%$. 

When assessing the effect of extending screening to the older age groups, we assumed that the incidence with screening among elderly women would follow the pattern observed in a steady-state, in the neighbouring country Sweden. We shifted the decline in incidence to follow the pattern observed in Sweden where screening continues until the age of 74. Incidence rates in our model, under different policies, are displayed in Figure \ref{fig:incidencerates}. The stage distributions of age groups older than 69 year old females were changed such that changed the first two groups were changed to have the same stage distribution as the group 68-69 old females and then each older group is changed to have the stage distribution of the preceding groups. 
\begin{figure}[H]
  \includegraphics[width=\linewidth]{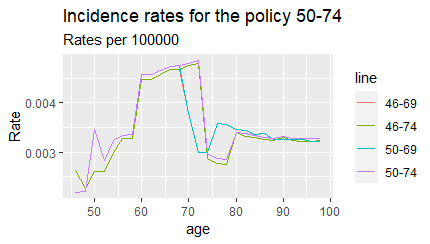}
  \caption{Incidence rates for different policies.}
\label{fig:incidencerates}
\end{figure}

In our results, we compare expected difference in costs per expected life years gained. That is, we compare
$$
\frac{\E C_{h_1}-\E C_{h_i}}{\E T_{h_1}-\E T_{h_i}},
$$
where $h_1$ is the current screening policy and $h_i$, $i=2,3,4$ is the modelled screening policy. A benefit of our model is that modelling of changes in the screening policy is affecting only the stage distribution $\mu$ (and through that to the dynamics of population size $N_j$). As such, survival probabilities and costs remain unaffected. In Table \ref{main}, we present the results under the four considered screening strategies for a cohort of 100000 individuals.

\section{Sensitivity analysis}
In addition to the main analysis, we conducted sensitivity analysis in order to assess the effect of the modelled incidence rates, estimated costs and the modelled stage distributions. The sensitivity analysis was conducted separately for the incidence rates, for the costs and for the stage distributions. In the sensitivity analysis, all the other factors were kept as in the main model. 

In assessing the effect of incidence rates we considered two separate scenarios. In the first one, we increased all the modelled incidence rates by 10\%. That is, when considering extending screening for the younger age group, we increased the incidence rates for the 46-69 year old women by 10\% while for 70 year old women and onwards the current incidence rates were used. When considering extending screening for the older age group, we increased the incidence rates by 10\% for women older that 69 years and the current incidence rates were used for 46-69 year old women. When considering extending to both, younger and older age groups, all incidence rates were increased. In the second scenario, the same was repeated using 10\% decrease. 

In assessing the effect of treatment costs we again considered two separate scenarios. In the first one, all treatment costs were increased by 10\%. In the second one, all treatment costs were increased by 50\%. 

In assessing the effect of the stage distribution, we considered a positive and a negative scenario. In the positive scenario all the modelled stage distributions were modified such that when considering extending screening for the younger age group, we kept the incidence rates as in the main analysis, but for 46-51 year old women the conditional probability (if diagnozed with cancer) for having localized cancer (stage 1) was increased by 0.02 and the conditional probability for having non-localized cancer (stage 2) was decreased by 0.02. When considering extending screening for the older age group, the conditional probability for having localized cancer was increased by 0.02 and the conditional probability for having non-localized cancer was decreased by 0.02 for women older that 69 years.  When considering extending to both, younger and older age groups, the modifications were done for the stage distributions of 46-51 year old women and of the women older than 69 years. In the negative scenario, the modifications were done to the other direction. That is, the conditional probabilities for having localized cancer (stage 1) was decreased by 0.02 and the conditional probability for having non-localized cancer (stage 2) was increased by 0.02 for the same age groups as in the positive scenario.

\begin{table*}[]
    \centering
    \begin{tabular}{|p{2cm}||p{2cm}|p{2cm}|p{2cm}|p{2cm}|p{2cm}|p{2cm}|}
 \hline
Target age group & Total costs (treatment and screening) & Total expected life-years & Ratio & Health care system incremental cost & Incremental cost-effectiveness ratio & Number of breast cancer deaths\\
 \hline
 50-69 yr & 245 498 112 & 386 0854,8 & 63,59 & & & 1686 \\
 46-69 yr & 246 267 889 & 386 1654,4 & 63,77 & 769 778 & 963 & 1658 \\
 50-74 yr & 253 498 611 & 386 0941,1 & 65,66 & 8000499 & 92736 & 1658 \\
 46-74 yr & 253 349 417 & 386 1865,8 & 65,60 & 7851305 & 7766 & 1616 \\
\hline
\end{tabular}
\caption{The effect of extending the breast cancer screening age group in Finland}
    \label{main}
\end{table*}
\begin{table*}[]
    \centering
    \begin{tabular}{|p{2cm}||p{2cm}|p{2cm}|p{2cm}|p{2cm}|p{2cm}|p{2cm}|}
 \hline
Target age group & Total costs (treatment and screening) & Total expected life-years & Ratio & Health care system incremental cost & Incremental cost-effectiveness ratio & Number of breast cancer deaths\\
  \hline
 50-69 yr & 245 498 112 & 386 0854,8 & 63,59 & & & 1686 \\
 46-69 yr & 261 211 188 & 385 9887,3 & 67,67 & 157 130 76 & -16242 & 1727 \\
 50-74 yr & 259 219 656 & 386 0237,3 & 67,15 & 137 215 44 & -22221 & 1743\\
 46-74 yr & 273 868 005 & 385 9408,4 & 70,96 & 283 698 93 & -19614 & 1769 \\
\hline
\end{tabular}
\caption{Sensitivity analysis, the effect of increasing the modelled incidence rates by 10\%}
    \label{inc1}
\end{table*}

\begin{table*}[]
    \centering
    \begin{tabular}{|p{2cm}||p{2cm}|p{2cm}|p{2cm}|p{2cm}|p{2cm}|p{2cm}|}
 \hline
Target age group & Total costs (treatment and screening) & Total expected life-years & Ratio & Health care system incremental cost & Incremental cost-effectiveness ratio & Number of breast cancer deaths\\
  \hline
 50-69 yr & 245 498 112 & 386 0854,8 & 63,59 & & & 1686 \\
 46-69 yr & 231 259 530 & 386 3428,3 & 59,86 & -142 385 82 & -5533 & 1588 \\
 50-74 yr & 247 760 910 & 386 1646,7 & 64,16 & 226 279 8 & 2857 & 1573 \\
 46-74 yr & 232 692 322 & 386 4338,8 & 60,22 & -128 057 90 & -3676 & 1462 \\
\hline
\end{tabular}
\caption{Sensitivity analysis, the effect of decreasing the modelled incidence rates by 10\%}
    \label{inc2}
\end{table*}

\begin{table*}[]
    \centering
    \begin{tabular}{|p{2cm}||p{2cm}|p{2cm}|p{2cm}|p{2cm}|p{2cm}|p{2cm}|}
 \hline
Target age group & Total costs (treatment and screening) & Total expected life-years & Ratio & Health care system incremental cost & Incremental cost-effectiveness ratio & Number of breast cancer deaths\\
 \hline
 50-69 yr & 267 210 206 & 386 0854,8 & 69,21 & & & 1686 \\
 46-69 yr & 267 456 097 & 386 1654,4 & 69,26 & 245 891 & 308 & 1658 \\
 50-74 yr & 275 500 860 & 386 0941,1 & 71,35 & 829 0654 & 96099 & 1658 \\
 46-74 yr & 274 735 067 & 386 1865,8 & 71,14 & 752 4862 & 7443 & 1616 \\
\hline
\end{tabular}
\caption{Sensitivity analysis, the effect  of increasing the treatment costs by 10\%}
    \label{cost1}
\end{table*}

\begin{table*}[]
    \centering
    \begin{tabular}{|p{2cm}||p{2cm}|p{2cm}|p{2cm}|p{2cm}|p{2cm}|p{2cm}|}
 \hline
Target age group & Total costs (treatment and screening) & Total expected life-years & Ratio & Health care system incremental cost & Incremental cost-effectiveness ratio & Number of breast cancer deaths\\
 \hline
 50-69 yr & 354 058 580 & 386 0854,8 & 91,70 & & & 1686 \\
 46-69 yr & 352 208 926 & 386 1654,4 & 91,21 & -184 9654 & -2313 & 1658 \\
 50-74 yr & 363 509 856 & 386 0941,1 & 94,15 & 945 1276 & 109553 & 1658 \\
 46-74 yr & 360 277 671 & 386 1865,8 & 93,29 & 6219091 & 6152 & 1616 \\
\hline
\end{tabular}
\caption{Sensitivity analysis, the effect  of increasing the treatment costs by 50\%}
    \label{cost2}
\end{table*}

\begin{table*}[]
    \centering
    \begin{tabular}{|p{2cm}||p{2cm}|p{2cm}|p{2cm}|p{2cm}|p{2cm}|p{2cm}|}
 \hline
Target age group & Total costs (treatment and screening) & Total expected life-years & Ratio & Health care system incremental cost & Incremental cost-effectiveness ratio & Number of breast cancer deaths\\
 \hline
 50-69 yr & 245 498 112 & 386 0854,8 & 63,59 & & & 1686 \\
 46-69 yr & 246 505 934 & 386 1579,1 & 63,84 & 100 7822 & 1391 & 1660 \\
 50-74 yr & 254 173 196 & 386 0806,9 & 65,83 & 867 5084 & -181278 & 1675 \\
 46-74 yr & 254 252 135 & 386 1658,4 & 65,84 & 875 4023 & 10893 & 1636 \\
\hline
\end{tabular}
\caption{Sensitivity analysis, the effect of decreasing the conditional probabilities of localized breast cancers}
    \label{stage1}
\end{table*}

\begin{table*}[]
    \centering
    \begin{tabular}{|p{2cm}||p{2cm}|p{2cm}|p{2cm}|p{2cm}|p{2cm}|p{2cm}|}
 \hline
Target age group & Total costs (treatment and screening) & Total expected life-years & Ratio & Health care system incremental cost & Incremental cost-effectiveness ratio & Number of breast cancer deaths\\
 \hline
 50-69 yr & 245 498 112 & 386 0854,8 & 63,59 & & & 1686 \\
 46-69 yr & 246 029 845 & 386 1729,8 & 63,71 & 531 733 & 607 & 1656 \\
 50-74 yr & 252 824 026 & 386 1075,2 & 65,48 & 732 5914 & 33239 & 1641 \\
 46-74 yr & 252 446 699 & 386 2073,2 & 65,37 & 694 8587 & 5703 & 1597 \\
\hline
\end{tabular}
\caption{Sensitivity analysis, the effect of increasing the conditional probabilities of localized breast cancers}
    \label{stage2}
\end{table*}

\begin{table*}[]
    \centering
    \begin{tabular}{|p{2cm}||p{2cm}|p{2cm}|p{2cm}|p{2cm}|p{2cm}|}
 \hline
Age group & $C_1$ Treatment cost (Stage 0) & $C_1$ Treatment cost (Stage 1) &  $C_1$ Treatment cost (Stage 2) & $C_1$ Treatment cost (Stage 3) & $C_1$ Treatment cost (Stage 4)\\
 \hline
 46-49 yr & 24800 &  20400& 28400 & 33300 & 15300
\\
 50-54 yr & 22400 & 18000 &26100 & 30900&12900
\\
 55-59 yr & 23300 & 18800 & 26900& 31800& 13800
\\
 60-64 yr & 20900&16400&24500&29400 & 11300
\\ 65-69 yr & 21100&16600 &24700 &29600 &11600
\\ 70-74 yr & 18700 &14300&22300&27200 &9200
\\75+ yr  & 14200 &9800&17800&22700 &4700
\\
\hline
\end{tabular}
\caption{Age and stage specific breast cancer treatment costs corresponding to the first year after diagnosis}
    \label{tab:cost1}
\end{table*}

\begin{table*}[]
    \centering
    \begin{tabular}{|p{2cm}||p{2cm}|p{2cm}|p{2cm}|p{2cm}|p{2cm}|}
 \hline
Age group & $C_2$ Treatment cost (Stage 0) & $C_2$ Treatment cost (Stage 1) &  $C_2$ Treatment cost (Stage 2) & $C_2$ Treatment cost (Stage 3) & $C_2$ Treatment cost (Stage 4)\\
 \hline
 46-49 yr & 4000&  2400& 3300& 6400& 2000
\\
 50-54 yr & 3700& 2000&2900 & 6000&1600
\\
 55-59 yr & 3400& 1700& 2600& 5700& 1300
\\
 60-64 yr & 3200 &1500&2400&5500& 1100
\\ 65-69 yr & 3400&1800 &2600&5700&1300
\\ 70-74 yr & 3200&1500&2400&5500&1100
\\75+ yr  & 3000&1300&2200&5300&900
\\
\hline
\end{tabular}
\caption{Yearly age and stage specific breast cancer treatment costs corresponding to the years 2-5 after diagnosis}
    \label{tab:cost2}
\end{table*}
\begin{table*}[]
    \centering
    \begin{tabular}{|p{2cm}||p{2cm}|p{2cm}|p{2cm}|p{2cm}|p{2cm}|}
 \hline
Age group & $C_3$ Treatment cost (Stage 0) & $C_3$ Treatment cost (Stage 1) &  $C_3$ Treatment cost (Stage 2) & $C_3$ Treatment cost (Stage 3) & $C_3$ Treatment cost (Stage 4)\\
 \hline
 46-49 yr & 38100&  33100& 38900 & 55400& 29200
\\
 50-54 yr & 31500 & 26600 &32400 & 48900&22700
\\
 55-59 yr & 27700& 22800 & 28500& 45000& 18900
\\
 60-64 yr & 24600&19700&25500&42000& 15800
\\ 65-69 yr & 26100&21200 &26900 &43400 &17300
\\ 70-74 yr & 16900 &11900&17700&34200 &8000
\\75+ yr  & 12300 &7300&13100&29600&3400\\
\hline
\end{tabular}
\caption{Age and stage specific breast cancer treatment costs corresponding to the last year before cancer death}
    \label{tab:cost3}
\end{table*}

\section{Results and Discussion}

Our analysis show that one should consider very carefully before extending any effective screening program to new age groups. For a cohort of 100000 individuals, extending the current screening strategy in Finland does not seem to provide large benefits, see Table \ref{main}. Current strategy in Finland is to provide biennial breast cancer screening for all 50-69 year old women. In our analysis, we considered extending screening to younger age groups starting from age 46 and to older age groups ending to age 74. In particular, based on our modeling, overall costs of extending screening to older age groups in Finland would be 92736 euros per a saved life year. In a cohort of 100000 individuals, the number of breast cancer deaths would decrease from 1686 to 1658. That is, the number of breast cancer deaths would decrease by 30 in the cohort's lifetime, 3.8 million women-years. Although extending breast cancer screening to older age groups would reduce mortality, the costs are high and screening may yield unnecessary worries and treatment related morbidities. Based on our modeling, the overall costs of extending screening to younger age groups in Finland would be 963 euros per a saved life year. The main reason for a much lower cost per a saved life year, when compared to extending screening to older age groups, is simply that the expected remaining life years are naturally much higher in young age groups. Based on our modelling, the decrease in the number of breast cancer deaths in a cohort of 100000 individuals, when extending screening to younger age groups, would be the same as when extending to the older age groups. Note, however, that our model does not consider screening related radiation burden that might be important on population level especially when screening starts from young age groups. Based on our model, if screening would be extended to both directions, the overall costs would be 7766 euros per a saved life year and in the cohort of 100000 individuals, the number of breast cancer deaths would decrease from 1686 to 1616. Note that incremental cost effectiveness ratio depend highly on incidence rates. In our analysis, we assumed that if we extend screening up to 74 year old women, the incidence rates of over 70 year old depend on the starting age of screening. If, under extending to older age groups only, we had used the same the same incidence rate for over 70 year old women as under extending to both directions, the incremental cost ratio for 50-74 would have been approximately 41 000 euros. Thus the small changes in incidence, displayed in Figure \ref{fig:incidencerates}, do have a large impact on the results. Note also that the estimations rely on extrapolation. The proposed model is suitable also for randomized data and when ever recent randomized data is available, it should be used instead of extrapolation.

Our sensitivity analysis reveals that if extended screening yield much higher incidence rates, extending screening is harmful, see Table \ref{inc1}. In that case screening would increase the costs but it would not lead to increase in expected life years. This highlights the importance of discussing the effects of increased incidence due to screening, overdiagnoses, and the harmful effects of unnecessary cancer treatment. Note, however, that based on our main model, the increase in incidence rates is much smaller that in the sensitivity analysis. Assessing the effect of unexpectedly high increase in incidence rates is, however, important before making new policies about screening target age groups. Note however, if the incidence rates are unexpectedly low, the benefits of screening increase, see Table \ref{inc2}. Under that scenario, extending screening to younger age group would both, lead to increase in expected life years and decrease in health care costs. Note that our sensitivity analysis related to incidence rates does not model the effect of changes in the actual incidence rates. Instead, it models the effect of estimating the changes in incidence rates incorrectly under the new screening policies.

Our sensitivity analysis related to the treatment costs reveals that if the treatment costs are increased by 10\%, then the overall costs of extending screening to younger age groups are only 308 euros per a saved life year and
if the treatment costs are increased by 50\%, then extending screening to younger age groups actually saves money. The reason behind that is that then the costs of additional screening in a cohort of 100000 individuals become lower that savings in treatment costs when cancers are detected earlier. If the treatment costs are increased by 10\%, then the overall costs of extending screening to older age groups are still high, 96099 euros per a saved life year, and if the treatment costs are increased by 50\%, then the overall costs of extending screening to older age groups are 109553 euros per a saved life year.

Our sensitivity analysis related to the effect of stage distribution reveals that if, assuming that cancer is diagnosed, the conditional probability of localised cancer increases, then the benefits of screening increase. If, assuming that cancer is diagnosed, the conditional probability of localised cancer decreases, then the benefits of screening decrease as well. The larger positive effect screening has on the stage distribution, the larger are the benefits of screening (assuming that the incidence rate does not increase too much).

When we applied the screening strategies to younger and older target ages, we made several assumptions. Several sensitivity analyses were performed to assess effect on results. In addition, we implicitly assumed that the attendance rate of the younger and older age groups would be the same as the attendance rate of the closest screened age groups. Since attendance to screening vary little by age group~ \cite{Screeningstatistics}, this assumption is can be regarded feasible and realistic for a Finnish cohort of 100,000 women with that respect. Often, the attendance rate is assumed to be $100\%$ in modeling studies and their results will thus represent an ideal and unrealistic world. We also assumed that the probability of detecting breast cancer in screening in younger and older ages is the same than in the closest screened age group which is the most feasible assumption. In addition, even if we used all Finnish breast cancer data, estimated survival probabilities may be uncertain which may affect our results. We are aware that cost-effectiveness models without natural history are likely to provide overly optimistic estimates. Nevertheless, magnitude and order of incremental costs-effectiveness ratios per life-years gained can be used to support policy-making.   

Extending screening saves lives, but it is expensive for older ages. The best strategy is to extend screening to younger ages with low additional incremental cost per life-year gained. Even then, however, number of averted breast cancer deaths is rather small compared to the current screening which prevents annually 100 breast cancer deaths each \cite{Annualreport2018}. Therefore, one possibility would be, instead of extending the target age groups, to extend screening to targeted groups. That is, to screen individuals whose risk for being diagnosed with breast cancer is exceptionally high. Note that the modeling approach provided in this paper can be applied for assessing the effects of risk-stratified screening assuming that suitable data is available. One could consider the cost effectiveness of screening individuals with known risk factors.


\subsection{Declaration of Conflicting Interests}
The authors declare no potential conflicts of interest with respect to the research, authorship, and/or publication of this article.

\subsection{Funding}
This work was supported by Cancer Foundation Finland Grant (grant number 200080).

\end{document}